\newcommand{\eg}{\emph{e.g.,}\xspace}
\newcommand{\ie}{\emph{i.e.,}\xspace}
\newcommand{\ignore}[1]{}
\newcommand{\paratitle}[1]{\vspace{1ex}\noindent\textbf{#1}}
  \providecommand\BibTeX{{%
    \normalfont B\kern-0.5em{\scshape i\kern-0.25em b}\kern-0.8em\TeX}}}
\begin{document}
\fancyhead{}

\title{CORE: Simple and Effective Session-based Recommendation within Consistent Representation Space}

\author{Yupeng Hou$^\dagger$}
\email{houyupeng@ruc.edu.cn}
\affiliation{
    \institution{Gaoling School of Artificial Intelligence, Renmin University of China}
    \country{}
}
\thanks{$\dagger$ Work done during internship at Ant Group.}

\author{Binbin Hu}
\email{bin.hbb@antfin.com}
\affiliation{
    \institution{Ant Group}
    \country{}
}
    
\author{Zhiqiang Zhang}
\email{lingyao.zzq@antfin.com}
\affiliation{
    \institution{Ant Group}
    \country{}
}

\author{Wayne Xin Zhao\textsuperscript{\Letter}}
\email{batmanfly@gmail.com}
\affiliation{
    \institution{Gaoling School of Artificial Intelligence, Renmin University of China}
    \institution{Beijing Key Laboratory of Big Data Management and Analysis Methods}
    \institution{Beijing Academy of Artificial Intelligence}
    \country{}
}
\thanks{\textsuperscript{\Letter} Corresponding author.}

\renewcommand{\authors}{Yupeng Hou, Binbin Hu, Zhiqiang Zhang, Wayne Xin Zhao}
\renewcommand{\shortauthors}{Hou, et al.}

\begin{abstract}
  Session-based Recommendation (SBR) refers to the task of predicting the next item  based on short-term user behaviors within an anonymous session.
  However, session embedding learned by a non-linear encoder is usually not in the same representation space as item embeddings,
  resulting in the inconsistent prediction issue while recommending items.
  To address this issue, we propose a simple and effective framework named \textbf{CORE}, which can unify the representation space for both the encoding and decoding processes.
  Firstly, we design a representation-consistent encoder that takes the linear combination of input item embeddings as session embedding, guaranteeing that sessions and items are in the same representation space.
  Besides, we propose a robust distance measuring method to prevent overfitting of embeddings in the consistent representation space.
  Extensive experiments conducted on five public real-world datasets demonstrate the effectiveness and efficiency of the proposed method.
  The code is available at: \url{https://github.com/RUCAIBox/CORE}.
\end{abstract}

\begin{CCSXML}
<ccs2012>
   <concept>
       <concept_id>10002951.10003317.10003347.10003350</concept_id>
       <concept_desc>Information systems~Recommender systems</concept_desc>
       <concept_significance>500</concept_significance>
       </concept>
 </ccs2012>
\end{CCSXML}

\ccsdesc[500]{Information systems~Recommender systems}

\keywords{session-based recommendation}

\maketitle

\begin{figure}[t]
    \centering
    \subfigure[]{\label{fig:shift:example}\includegraphics[width=0.225\textwidth]{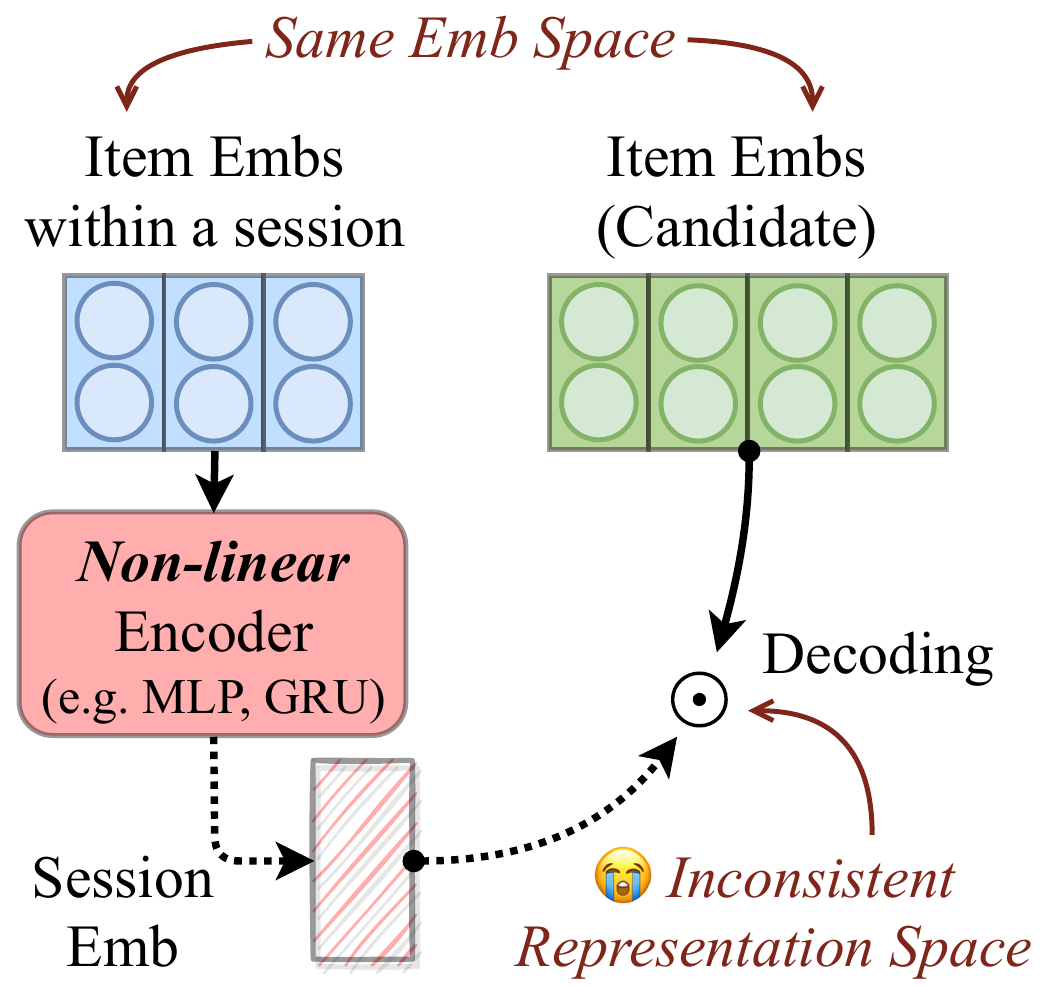}}
    \subfigure[]{\label{fig:shift:tsne}\includegraphics[width=0.215\textwidth]{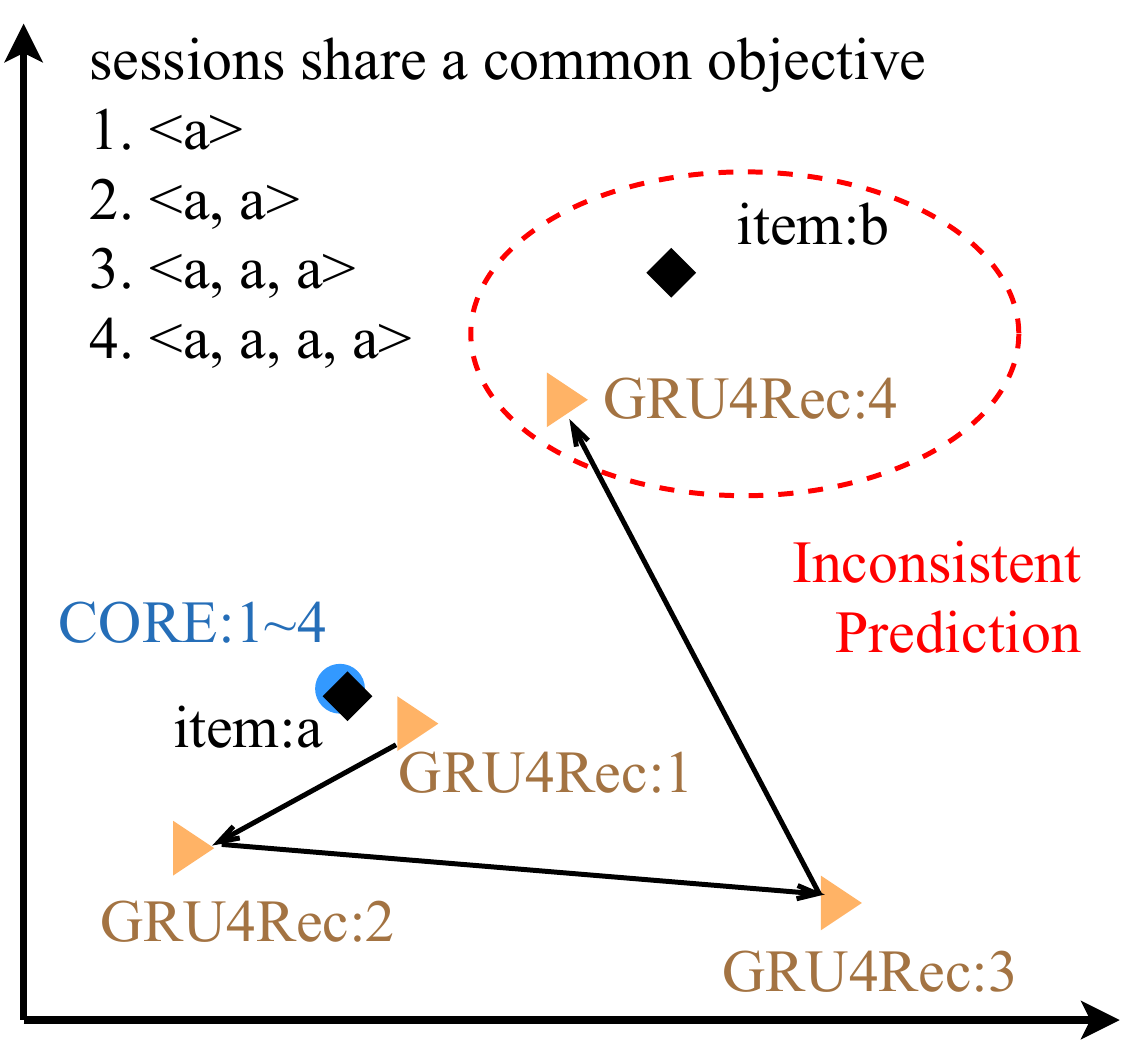}}
    \caption{(a) Encoder-decoder framework of most existing session-based recommendation models and (b) Inconsistent prediction issue while measuring the distance between embeddings for recommending.}
    \label{fig:shift}
\end{figure}

\section{Introduction}

Session-based Recommendation (SBR) aims to capture short-term and dynamic user preferences according to user behaviors (\eg clicked items) within anonymous sessions, which is an important research topic in recommender systems~\cite{SBRsurvey}.
Recent SBR methods usually follow the encoder-decoder framework and
focus on designing effective neural architecture as encoder,
such as Recurrent Neural Network (RNN)~\cite{hidasi2016gru4rec,li2017narm,ren2019repeatnet}, Transformer~\cite{kang2018sasrec,sun2019bert4rec,xie2020cl4rec,zhou20s3rec,bian2021contrastive,he2021locally,zhou2022fmlp} and Graph Neural Network (GNN)~\cite{wu2019srgnn,xu2019gcsan,chen2020lessr,qiu2020fgnn}.
As for the decoder, a widely adopted way is to calculate dot product of session and item embeddings as the interaction probability of the next item.

In line with the focus of the above-mentioned efforts, we notice that 
session embeddings are not usually in the same representation space as item embeddings.
Generally speaking, user behaviors within a short session tend to share a common focus~\cite{chen2020lessr}.
In session-based recommendation, session embeddings are expected to reflect users' short-term preference
and should be similar to embeddings of preferred items.
However, as shown in Figure~\ref{fig:shift:example},
when item embeddings are encoded by non-linear neural networks,
the resulting session embedding doesn't necessarily fall into the space spanned by bases of item embeddings,
and thus it may become unable to  concisely  represent user's preference.
For example,
as illustrated in Figure~\ref{fig:shift:tsne},
for sessions with a common objective (item $a$ clicked multiple times for simulation),
we observe that their embeddings that encoded by non-linear encoder (\eg GRU4Rec~\cite{hidasi2016gru4rec})
fall into different points in embedding space, giving inconsistent prediction when calculating similarities to item embeddings.

Considering the above issues,
we strive to unify the representation space for both encoding and decoding processes in session-based recommendation.
The basic idea is to represent sessions in the item embedding space, for example, by directly summing up item embeddings within the session.
However, in this way, \textcolor{black}{we fail to model the sequential nature, which can be captured by neural encoders like RNNs.}
Thus we consider overcoming two key challenges.
(1) How to design a more suitable encoder, \textcolor{black}{so that we can take advantage of deep non-linear neural networks' immense capability, given that session and item embeddings share a consistent representation space.}
(2) \textcolor{black}{Once the representation space is unified and item embeddings are directly involved in score calculation and model optimization, how to measure the distance between embeddings to avoid overfitting of item embeddings?
}

To this end, we propose a rather simple yet effective framework for session-based recommendation, where session embeddings and item embeddings are in \textbf{CO}nsistent \textbf{RE}presentation space, namely \textbf{CORE}.
Firstly, we encode session embeddings as the linear combination of item embeddings within the session, ensuring the encoded embedding has a consistent representation space as item embeddings.
The weight of each item in the session is learned via a deep non-linear neural network to incorporate various of inductive biases, such as order and importance of items.
Secondly, we revisit
the widely adopt dot product distance measuring in the perspective of optimizing tuplet loss and improve the robustness of distance measuring from multiple aspects.
\textcolor{black}{Extensive experiments on five public datasets demonstrate the effectiveness and efficiency of the proposed framework. 
We also show that performances of existing session-based recommendation models can be significantly improved by the proposed techniques.
}

\section{Methodology}

In this section, we introduce the proposed framework CORE, a simple and effective framework for session-based recommendation within consistent representation space.
Figure~\ref{fig:overall} presents the overall architecture of CORE. 

Now, we start with a brief description of the typical session-based recommendation, which follows an encoder-decoder framework. Firstly, each item is embedded into an unified embedding space.
$\bm{h}_{i} = \operatorname{Emb}(v_i) \in \mathbb{R}^d$ denotes the item embedding for item $v_i$,
where $\operatorname{Emb}(\cdot)$ is the item embedding look-up table and $d$ is the dimension of vectors.
Then we have $\bm{h}_s = \operatorname{Encoder}([\bm{h}_{s,1}, \ldots, \bm{h}_{s,n}]) \in \mathbb{R}^d$ to encode a session $s$ with $n$ items, where $\operatorname{Encoder}(\cdot)$ is usually a non-linear neural network.
Finally, we can predict the  probability distribution for the next item, \ie $\bm{\hat{y}} = \operatorname{Decoder}(\bm{h}_s) \in \mathbb{R}^m$, where $m$ is the number of all items.

\subsection{Representation-Consistent~Encoding}

Here we aim at encoding sessions into item embedding space to overcome the inconsistent representation space issue.
As most existing encoders adopt a non-linear encoder (\eg RNNs or Transformers) directly stacked over the input item embeddings,
the encoded session embeddings are not in the same representation space as items.
In order to keep embeddings within the same space, a natural idea is whether  we can remove non-linear activation functions over item embeddings
and encode sessions as the linear combination of item embeddings.

\begin{figure}[!t]
  \centering
  \includegraphics[width=0.42\textwidth]{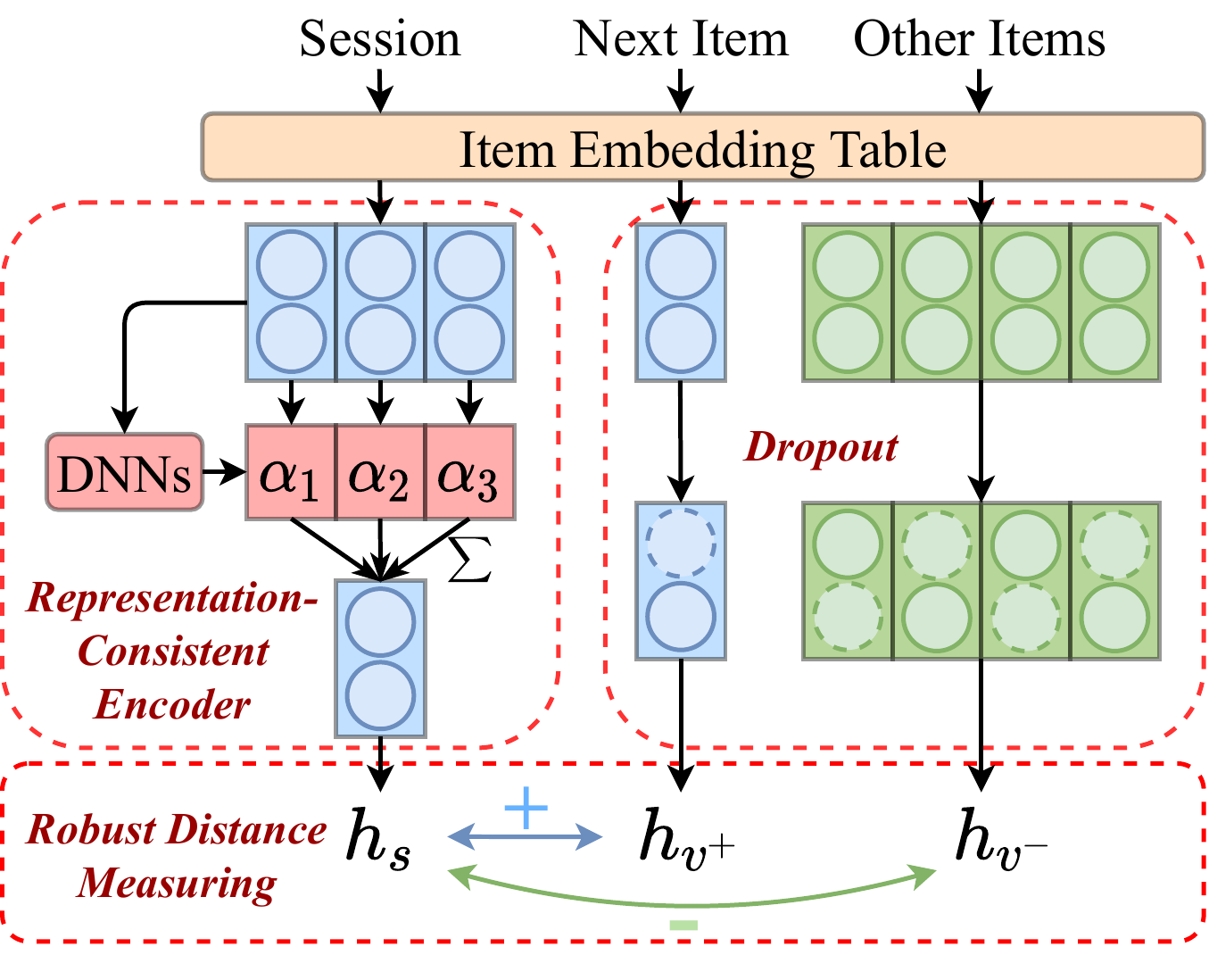}
  \caption{Overall framework of CORE.}
  \label{fig:overall}
\end{figure}

\begin{table*}[!t]
  \caption{Overall performance comparison on five datasets.  ``$*$'' indicates the statistical significance for $p<0.01$ compared to the best baseline method with paired $t$-test. Sessions are split into train/validation/test set in a ratio of 8:1:1 for fair evaluation. We indicate performances of FPMC on Yoochoose as ``$-$'' due to the OOM issue.}
  \label{tab:overall}
  \setlength{\tabcolsep}{1.3mm}{
  \resizebox{2.1\columnwidth}{!}{
  \begin{tabular}{ccccccccccccccr}
  \toprule
    Dataset                       & Metric &  FPMC & GRU4Rec & NARM & SR-GNN & NISER+ & LESSR & SGNN-HN & SASRec & GC-SAN & CL4Rec & CORE-ave & CORE-trm & Improv. \\
    \midrule
    \midrule
    \multirow{2}{*}{Diginetica}   & R@20   & 31.83 &  45.43  & 47.68 & 48.76 & \underline{51.23}  & 48.80 &  50.89  &  49.86 &  50.95 &  50.03 &   50.21  &   \textbf{52.89}*  & +3.24\% \\
                                  & M@20   &  8.79 &  14.77  & 15.58 & 16.93 & \underline{18.32}  & 16.96 &  17.25  &  17.19 &  17.84 &  17.26 &   18.07  &   \textbf{18.58}*  & +1.42\% \\
    \midrule
    \multirow{2}{*}{Nowplaying}   & R@20   & 10.18 &  13.80  & 14.17 & 15.28 & 16.55  & 17.60 &  16.75  &  \underline{20.69} &  18.30 &  20.59 &   20.31  &   \textbf{21.81}*  & +5.41\% \\
                                  & M@20   &  4.51 &   5.83  &  6.11 &  6.10 &  7.14  &  7.13 &   6.13  &   \textbf{8.14} &   \underline{8.13} &   8.21 &    6.62  &    7.35  & $-$ \\
    \midrule
    \multirow{2}{*}{RetailRocket} & R@20   & 46.04 &  55.32  & 58.65 & 58.71 & \underline{60.36}  & 56.22 &  58.82  &  59.81 &  60.18 &  59.69 &   59.18  &   \textbf{61.85}*  & +2.47\% \\
                                  & M@20   & 21.95 &  33.18  & 34.69 & 36.42 & 37.43  & 37.11 &  35.72  &  36.03 &  36.85 &  35.95 &   \underline{37.52}*  &   \textbf{38.76}*  & +3.55\% \\
    \midrule
    \multirow{2}{*}{Tmall}        & R@20   & 20.30 &  23.25  & 31.67 & 33.65 & 35.97  & 32.45 &  39.14  &  35.82 &  35.32 &  35.59 &   \textbf{44.67}*  &   \underline{44.48}*  & +14.13\% \\
                                  & M@20   & 13.07 &  15.78  & 21.83 & 25.27 & 27.06  & 23.96 &  23.46  &  25.10 &  23.48 &  25.07 &   \textbf{31.85}*  &   \underline{31.72}*  & +17.70\% \\
    \midrule
    \multirow{2}{*}{Yoochoose}    & R@20   &  $-$  &  60.78  & 61.67 & 61.84 & 62.99  & 62.89 &  62.49  &  63.55 &  63.24 &  \underline{63.61} &   58.83  &   \textbf{64.61}*  & +1.57\% \\
                                  & M@20   &  $-$  &  27.27  & 27.82 & 28.15 & \underline{28.98}  & 28.59 &  28.24  &  28.63 &  \textbf{29.00} &  28.73 &   25.05  &   28.24  & $-$ \\
  \bottomrule
\end{tabular}
}
}
\end{table*}

Along this line, we propose a Representation-Consistent Encoder (RCE), whose output session embedding is the weighted summarization of item embeddings within a session.
Linear combination guarantees the session embeddings are always in the same embedding space as items.
Although non-linear layers are removed between the encoded session embedding and input item embeddings,
they are essential to incorporating inductive biases and learning weights for input item embeddings.
Formally, we apply arbitrary deep neural networks to learn weights for each item embedding in a session,
\begin{align}
  \bm{\alpha} =& \operatorname{DNN}([\bm{h}_{s,1};\bm{h}_{s,2};\ldots;\bm{h}_{s,n}])\label{eq:alpha},\\
  \bm{h}_s =& \sum_{i=1}^{n} \alpha_i \bm{h}_{s,i},\label{eq:hs}
\end{align}
Then we give two detailed implementations of $\operatorname{DNN}$ without carefully designed architecture.

\subsubsection{Learning Weights via Mean Pooling.}\label{sec:mean_pooling} This variant adopts a mean pooling layer as $\operatorname{DNNs}$, \ie $\alpha_i = \frac{1}{n}$.
This variant ignores the order of items in a session, as well as the importance of each item.

\subsubsection{Learning Weights via Transformer.}\label{sec:transformer} In this variant, we utilize $L$-layers self-attention blocks like SASRec~\cite{kang2018sasrec} as $\operatorname{DNNs}$.
\begin{align}
  \bm{F} = \operatorname{Transformers}([\bm{h}_{s,1};\bm{h}_{s,2};\ldots;\bm{h}_{s,n}]),
\end{align}
where $\bm{F} \in \mathbb{R}^{n\times d'}$ and $d'$ is the output dimension of feed forward network of the last layer of self-attention blocks.
Then we can obtain the normalized weights $\bm{\alpha} \in \mathbb{R}^{n}$,
\begin{align}
  \bm{\alpha} = \operatorname{softmax}(\bm{w} \cdot \bm{F}^\top),
\end{align}
where $\bm{w} \in \mathbb{R}^{d'}$ are learnable parameters. This variant captures the sequential nature via positional encoding technique in Transformer.

\subsection{Robust Distance Measuring for Decoding}

As sessions are encoded as linear combination of item embeddings
and decoded by measuring the distances to items in embedding space,
item embeddings are directly involved in the distance calculation between embeddings, leading to a high risk of overfitting.
Thus we seek a robust way to measure the distance in the unified representation space
to prevent overfitting.
By reviewing the widely adopted dot product distance, we have the following lemma:

\newtheorem{lem}{Lemma}
\begin{lem}
  Given a session embedding $\bm{h}_s$ and item embeddings $\{\bm{h}_{v} | v \in \mathcal{V}\}$,
  when dot product is used to measure the embedding distance,
  optimizing cross entropy loss is approximately proportional to optimize $(N-1)$-tuplet loss~\cite{sohn2016npairloss} with a fixed margin of $2$.
\end{lem}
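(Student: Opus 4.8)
The plan is to connect the cross entropy loss over the softmax distribution with the $(N-1)$-tuplet loss of Sohn~\cite{sohn2016npairloss} by unwrapping both into comparable forms. First I would write out the cross entropy loss explicitly. With dot product as the distance measure, the predicted probability of the true next item $v^+$ is $\hat{y}_{v^+} = \exp(\bm{h}_s^\top \bm{h}_{v^+}) / \sum_{v \in \mathcal{V}} \exp(\bm{h}_s^\top \bm{h}_{v})$, so the per-instance cross entropy loss is $\mathcal{L}_{CE} = -\log \hat{y}_{v^+} = \log\big(1 + \sum_{v \neq v^+} \exp(\bm{h}_s^\top \bm{h}_{v} - \bm{h}_s^\top \bm{h}_{v^+})\big)$, where the last equality comes from dividing numerator and denominator by $\exp(\bm{h}_s^\top \bm{h}_{v^+})$ and splitting off the $v = v^+$ term.

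Next I would recall the form of the $(N-1)$-tuplet loss. Given an anchor (here the session embedding $\bm{h}_s$), one positive $\bm{h}_{v^+}$, and $N-1$ negatives $\bm{h}_{v_j^-}$, the tuplet loss is $\mathcal{L}_{tuplet} = \log\big(1 + \sum_{j} \exp(\bm{h}_s^\top \bm{h}_{v_j^-} - \bm{h}_s^\top \bm{h}_{v^+})\big)$; with an explicit margin $\rho$ inside the similarity comparison it becomes $\log\big(1 + \sum_{j} \exp(\bm{h}_s^\top \bm{h}_{v_j^-} - \bm{h}_s^\top \bm{h}_{v^+} + \rho)\big)$ or a variant thereof. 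Taking all items other than $v^+$ as the negative set gives $N - 1 = |\mathcal{V}| - 1$ negatives, matching the $(N-1)$-tuplet terminology. I would then match this expression term-by-term against the rewritten $\mathcal{L}_{CE}$ above; the two coincide exactly when the margin is $0$, and the claimed factor of $2$ in the margin presumably enters once the "robust distance measuring" normalization (e.g.\ scaling embeddings onto a sphere so that $\bm{h}_s^\top\bm{h}_s$ and $\bm{h}_v^\top\bm{h}_v$ are controlled) is applied: writing the squared Euclidean distance $\|\bm{h}_s - \bm{h}_v\|^2 = \|\bm{h}_s\|^2 + \|\bm{h}_v\|^2 - 2\bm{h}_s^\top\bm{h}_v$ shows the dot product appears with a coefficient $2$, which I would carry through the exponents to obtain the fixed margin of $2$.

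The word \emph{approximately proportional} suggests the identification is not literal but up to constants and a monotone reparametrization, so I would be careful to state exactly which normalization assumption (unit-norm embeddings, or a temperature $\tau$) is being invoked and absorb the residual $\|\bm{h}_s\|^2$ and $\|\bm{h}_v\|^2$ terms either as constants (if embeddings are normalized) or as lower-order perturbations. The main obstacle I anticipate is precisely pinning down the sense of the approximation: making rigorous why the self-similarity terms can be treated as constant (or negligible) so that minimizing $\mathcal{L}_{CE}$ and minimizing $\mathcal{L}_{tuplet}$ with margin $2$ drive the embeddings in the same direction. I would handle this by assuming $\ell_2$-normalized embeddings (a natural choice in the consistent representation space, and one that the "robust distance measuring" subsection seems to motivate), under which $\|\bm{h}_s\|^2 = \|\bm{h}_v\|^2 = 1$ for all $v$, so those terms drop out of every pairwise difference and the correspondence becomes exact up to the overall scaling factor, yielding the stated margin of $2$.
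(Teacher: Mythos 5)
Your first step (rewriting the cross entropy as $\log\bigl(1+\sum_{v^-}\exp(\bm{h}_s^\top\bm{h}_{v^-}-\bm{h}_s^\top\bm{h}_{v^+})\bigr)$) matches the paper, but after that there is a genuine gap: you never actually derive the fixed margin of $2$, which is the whole content of the lemma. The paper's argument is a linearization: it applies $\log(1+x)\simeq x$ and then $\exp(x)\simeq 1+x$, so the softmax-form loss is approximated by the additive sum $\sum_{v^-}\bigl(\bm{h}_s\cdot\bm{h}_{v^-}-\bm{h}_s\cdot\bm{h}_{v^+}+1\bigr)$; the constant $+1$ coming from the first-order expansion of $\exp$ is the source of the margin. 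Multiplying by $2$ and using $\|\bm{h}_s-\bm{h}_v\|^2=\|\bm{h}_s\|^2+\|\bm{h}_v\|^2-2\bm{h}_s\cdot\bm{h}_v$ (with item norms treated as equal, which is also your normalization assumption) turns this into $\sum_{v^-}\bigl(\|\bm{h}_s-\bm{h}_{v^+}\|^2-\|\bm{h}_s-\bm{h}_{v^-}\|^2+2\bigr)$, i.e.\ an $(N-1)$-tuplet (multi-negative triplet) loss in squared distances with additive margin $2$. The ``approximately'' in the statement refers precisely to these two Taylor approximations, not merely to absorbing norm terms.

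Your proposal instead identifies the cross entropy with Sohn's softmax-form tuplet loss, correctly observes that this identification is exact with margin $0$, and then speculates that the $2$ ``presumably enters'' by carrying the coefficient $2$ from the squared-distance expansion ``through the exponents.'' That mechanism does not work: a multiplicative factor of $2$ inside the exponent is a temperature rescaling, not an additive margin, and no choice of normalization turns the exact softmax identity into a margin-$2$ comparison. The target of the lemma is the \emph{additive-margin} form of the tuplet loss over squared Euclidean distances, and reaching it requires the linearization step you are missing. So while your setup and the unit-norm assumption are compatible with the paper's proof, the key idea that produces the margin of $2$ is absent, and the explanation you offer in its place would fail.
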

\begin{proof}
  The lemma can be proved by rewriting the original loss function as,
  \begin{align*}
    \ell_{\text{ori}} =& -\log \frac{\exp(\bm{h}_s\cdot \bm{h}_{v^+})}{\sum_{i=1}^{m}\exp(\bm{h}_s\cdot \bm{h}_{v_i})}\\
    =& \log \left[1 + (|\mathcal{V}|-1)\sum\limits_{v^-\in \mathcal{V}\backslash \{v^+\}}\exp(\bm{h}_s\bm{h}_{v^-} - \bm{h}_s\bm{h}_{v^+})\right],\\
    \simeq& (|\mathcal{V}|-1) \sum\limits_{v^-\in \mathcal{V}\backslash \{v^+\}}\exp(\bm{h}_s\bm{h}_{v^-} - \bm{h}_s\bm{h}_{v^+})\\
    \simeq& (|\mathcal{V}|-1) \sum\limits_{v^-\in \mathcal{V}\backslash \{v^+\}}(\bm{h}_s\bm{h}_{v^-} - \bm{h}_s\bm{h}_{v^+} + 1),\\
    \propto& \sum_{v^-\in \mathcal{V}\backslash \{v^+\}} \left(\|\bm{h}_s - \bm{h}_{v^+}\|^2 - \|\bm{h}_s - \bm{h}_{v^-}\|^2 + 2\right).
  \end{align*}
  where $m$ denotes the number of items, $v^+$ denotes the ground-truth next item for session $s$.
\end{proof}

Given the above lemma, we consider improving the robustness of distance measuring in three aspects. Firstly, data distributions for  different recommendation scenarios may vary a lot, and a fixed margin is inappropriate. Thus, we replace the fixed margin $2$ by a controllable hyper-parameter $\tau$ \textcolor{black}{to suit different scenarios}.
Secondly, we utilize Dropout~\cite{srivastava2014dropout}, a widely adopted technique for robust training, directly over candidate item embeddings. Thirdly,
inspired by recent advances in Contrastive Learning~\cite{chen2020simclr,he2020moco,lin2022ncl},
we propose to measure the distance via cosine distance
for better alignment and uniformity of item embeddings~\cite{wang2020aliANDuni}.
Then we design the loss function with the proposed Robust Distance Measuring (RDM) technique as,
\begin{align}
  \ell = -\log \frac{\exp\left(\cos(\bm{h}_s, \bm{h}'_{v^+})/\tau \right)}{\sum_{i=1}^{m}\exp\left(\cos(\bm{h}_s, \bm{h}'_{v_i})/\tau\right)},
\end{align}
where $\bm{h}'$ denotes the item embeddings with dropout.

\section{Experiments}

\begin{table}
    \caption{Statistics of the datasets.}
    \label{tab:datasets}
    \small
    \resizebox{\columnwidth}{!}{
    \begin{tabular}{lrrrc}
    \toprule
    Dataset & \# Interactions & \# Items & \# Sessions & Avg. Length \\
    \midrule
    \midrule
    Diginetica & 786,582 & 42,862 & 204,532 & 4.12 \\
    Nowplaying & 1,085,410 & 59,593 & 145,612 & 9.21 \\
    RetailRocket & 871,637 & 51,428 & 321,032 & 6.40 \\
    Tmall & 427,797 & 37,367 & 66,909 & 10.62 \\
    Yoochoose & 1,434,349 & 19,690 & 470,477 & 4.64 \\
    \bottomrule
    \end{tabular}
    }%
\end{table}

\paratitle{Datasets and evaluation metrics.}
We conduct experiments on five public datasets collected from real-world platforms:
\emph{Diginetica}, %
\emph{Nowplaying}, %
\emph{RetailRocket}, %
\emph{Tmall} %
and \emph{Yoochoose} %
with their statistics shown in Table~\ref{tab:datasets}.
We filter out sessions of length 1 and items appearing less than 5 times across all datasets~\cite{li2017narm,wu2019srgnn,chen2020lessr},
and split the sessions in each dataset into train/validation/test set in temporal order in a ratio of 8:1:1~\cite{zhao2020revisit}.
To evaluate the performance of different methods, 
we employ two widely-used metrics, top-$20$ Recall (R@$20$) and top-$20$ Mean Reciprocal Rank (M@$20$)~\cite{wu2019srgnn,yu2020tagnn}.

\paratitle{Baselines.}
To evaluate the performance of the proposed method,
we compare it with the following representative baselines:
(1) Matrix factorization based methods:
\underline{\textbf{FPMC}}~\cite{rendle2010fpmc};
(2) RNN-based methods:
\underline{\textbf{GRU4Rec}}~\cite{hidasi2016gru4rec} and
\underline{\textbf{NARM}}~\cite{li2017narm};
(3) GNN-based methods:
\underline{\textbf{SR-GNN}}~\cite{wu2019srgnn},
\underline{\textbf{NISER+}}~\cite{gupta2019niser},
\underline{\textbf{LESSR}}~\cite{chen2020lessr} and
\underline{\textbf{SGNN-HN}}~\cite{pan2020star}.
(4) Transformer-based methods:
\underline{\textbf{SASRec}}~\cite{kang2018sasrec},
\underline{\textbf{GC-SAN}}~\cite{xu2019gcsan} and
\underline{\textbf{CL4Rec}}~\cite{xie2020cl4rec}.
Note that we don't take methods that introduce additional collaborative filtering information~\cite{wang2020gcegnn,jiang2020multiplex,xia2021dhcn,cho2021proxysr} or other side features~\cite{hidasi2016gru4recf} as baselines.
For CORE, we implement two simple variants using mean pooling (Sec.~\ref{sec:mean_pooling}) and Transformers (Sec.~\ref{sec:transformer}) as $\operatorname{DNNs}$ in Eqn.~\eqref{eq:alpha}, namely \emph{CORE-ave} and \emph{CORE-trm}, respectively.

\paratitle{Implementation details}
The proposed models and all the baselines are implemented based on a popular open-source recommendation library \texttt{RecBole}\footnote{\url{https://recbole.io}}~\cite{zhao2020recbole} and its extension \texttt{RecBole-GNN}\footnote{\url{https://github.com/RUCAIBox/RecBole-GNN}} for easy development and reproduction. 
The dimension of the latent vectors is fixed to $100$, and each session is truncated within a maximum length of $50$.
We optimized all the compared methods using Adam optimizer~\cite{kingma2015adam} with a learning rate of $0.001$,
and adopted early-stop training if the M@$20$ performance on the validation set decreased for $5$ consecutive epochs.
We use a batch size of $2048$ for all methods (except TAGNN, for which we use $100$ due to the large memory consumption).
Other hyper-parameters of baselines are carefully tuned following the suggestions from the original papers and we report
each performance under its optimal settings.
For CORE, 
we apply a grid search for controllable margin $\tau$ among $\{0.01, 0.05, 0.07, 0.1, 1\}$ and dropout ratio $\rho$ among $\{0, 0.1, 0.2\}$.
We finally report metrics on the test set with models that gain the highest performance on the validation set.

\begin{figure}[!t]
    \centering
    \includegraphics[width=0.48\textwidth]{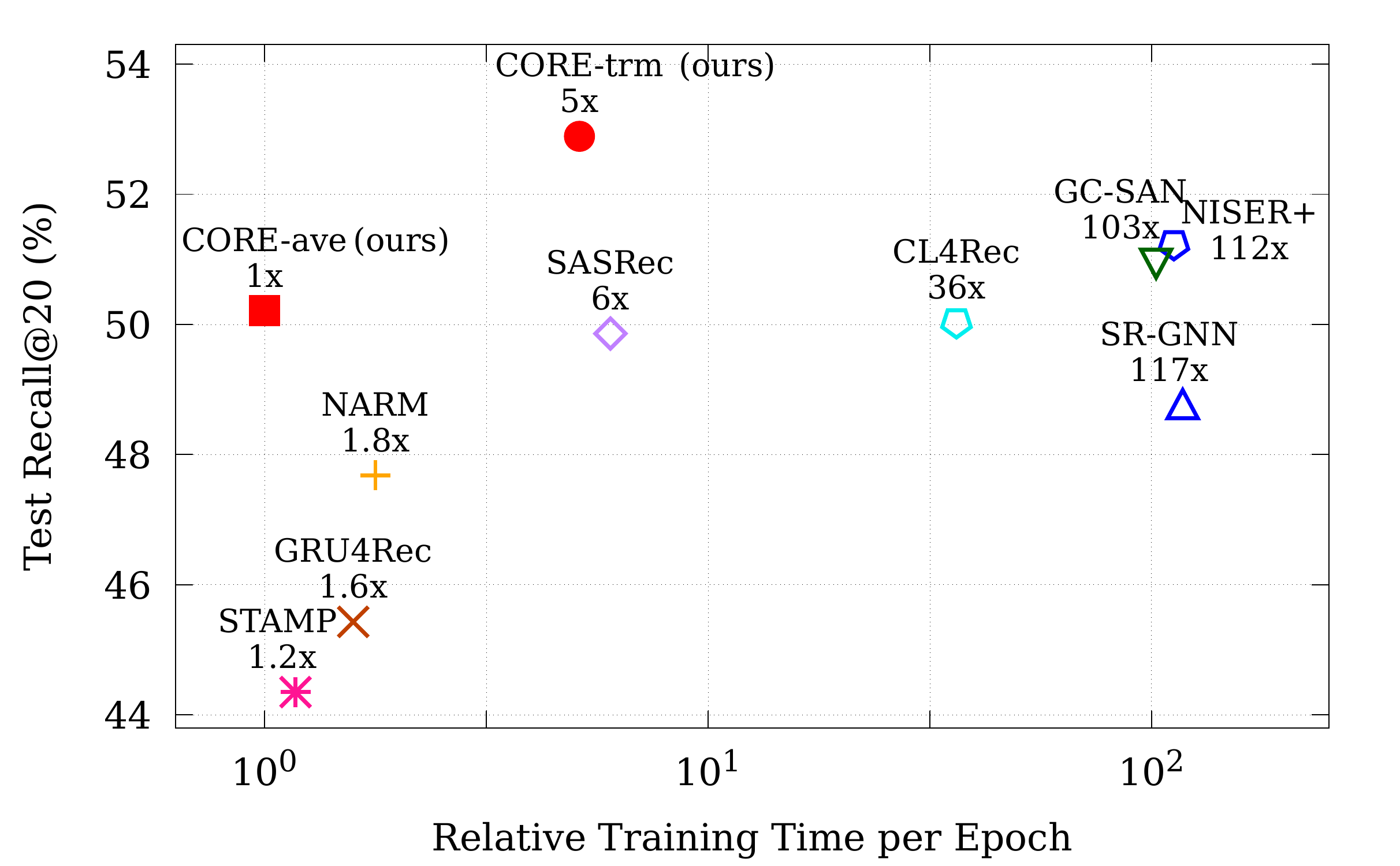}
    \caption{Performances over training time relative to that of CORE-ave on Diginetica.}
    \label{fig:time}
  \end{figure}

\paratitle{Overall comparison.}
From the experimental results in Table~\ref{tab:overall}, we can observe: 
CORE outperforms all the baselines significantly over $8$ of $10$ metrics on the adopted five datasets.
Different from these baselines, CORE
doesn't apply non-linear layers over item embeddings to encode sessions,
but learns weights for each item embedding and adopts a weighted sum for encoding session embeddings and item embeddings in consistent representation space.
Besides, notice that CORE-ave considers neither item order in a session, nor item importances.
However, it still outperforms all the baselines on $3$ metrics, while gaining comparable results with several strong baselines on other metrics.
Without carefully designed encoder architecture, CORE can achieve impressive performance,
which further confirms the importance of encoding session embeddings and item embeddings in consistent representation space.

\paratitle{Analysis 1: Efficiency.}
In Figure~\ref{fig:time}, we plot the performance of several popular session-based recommendation models as well as CORE-trm over their training time per epoch relative to that of CORE-ave on the Diginetica dataset.
We measure the training time on an NVIDIA TITAN V GPU.
As we can see, CORE-ave is the fastest while achieving competitive performance.
By applying mean pooling over sessions,
CORE-ave minimizes memory usage and only
learns a single item embedding table during training.
CORE-trm has a similar training time as SASRec while achieving the best performance among the compared baselines.

\begin{table}[!t]
    \caption{Ablation study of CORE's variants on Diginetica and RetailRocket.}
    \label{tab:ablation}
    \setlength{\tabcolsep}{3.0mm}{
    \begin{tabular}{lcccc}
    \toprule
        \multirow{2} * {Method} & \multicolumn{2}{c}{Diginetica} & \multicolumn{2}{c}{RetailRocket} \\
        & R@20 & M@20 & R@20 & M@20 \\
        \midrule
        \midrule
        CORE & \bm{$52.89$} & \bm{$18.58$}  & \bm{$61.85$}  & \bm{$38.76$} \\
        \ w/o RCE & 49.82 & 17.41 & 59.59 & 36.27 \\
        \ w/o RDM & 52.31 & 18.38 & 60.93 & 37.72 \\
        SASRec & 49.86 & 17.19 & 59.81 & 36.03 \\
    \bottomrule
\end{tabular}
}
\end{table}

\begin{table}[!t]
    \caption{Performance comparison of different methods and their improved variants on two datasets.}
    \label{tab:case_improve}
    \centering
     \setlength{\tabcolsep}{3.0mm}{
    \begin{tabular}{lcccc}
    \toprule
        \multirow{2} * {Method} & \multicolumn{2}{c}{Diginetica} & \multicolumn{2}{c}{RetailRocket} \\
        & R@20 & M@20 & R@20 & M@20 \\
        \midrule
        \midrule
        NARM & 47.68 & 15.58 & 58.65 & 34.69 \\
        \ \ \ + RCE & 51.86 & 18.27 & 60.77 & 37.01 \\
        \ \ \ + RDM & 51.62 & 17.79 & 61.33 & 37.11 \\
        \midrule
        \ \ \ + All & \bm{$52.51$} & \bm{$18.58$} & \bm{$62.19$} & \bm{$38.84$} \\
        \midrule
        SR-GNN & 48.76 & 16.93 & 58.71 & 36.42 \\
        \ \ \ + RCE & 49.51 & 17.53 & 57.05 & 35.70 \\
        \ \ \ + RDM & 51.36 & 18.57 & 61.41 & 38.27 \\
        \midrule
        \ \ \ + All & \bm{$52.38$} & \bm{$18.95$} & \bm{$61.43$} & \bm{$38.38$} \\
    \bottomrule
\end{tabular}
 }
\end{table}

\paratitle{Analysis 2: Ablation study.}
CORE involves several components (\ie RCE and RDM) and we now analyze how each part contributes to the performance.
As SASRec and CORE-trm (CORE for simplify) shares the same Transformer architecture, we select SASRec as the base model to compare.
We mainly consider the following variants:
\textbf{\underline{CORE w/o RCE}} means replacing representation-consistent encoder of CORE to SASRec's encoder;
\textbf{\underline{CORE w/o RDM}} means replacing the proposed robust distance measuring techniques to the traditional dot product distance;
In Table~\ref{tab:ablation}, we can observe that the performance order can be summarized as CORE $>$ CORE w/o RDM $>$ CORE w/o RCE $\simeq$ SASRec.
These results indicate that all the parts are useful to improve the final performance.

\paratitle{Analysis 3: Improving existing methods with RCE \& RDM.}
Here we slightly modify several popular existing session-based models,
showing how the proposed components RCE and RDM improve the performances of existing methods.

Take \emph{SR-GNN~\cite{wu2019srgnn} + RCE} as an example, the original encoder can be formulated as $\bm{h}_s = \bm{W}[\bm{h}_{s,n};\bm{h}_g]$, where $\bm{W} \in \mathbb{R}^{d\times 2d}$ are learnable parameters and $\bm{h}_g$ can be seen as the linear combination of item embeddings within a session.
Then we remove $\bm{W}$ and change the encoder to $\bm{h}_s = (\bm{h}_{s,n} + \bm{h}_g) / 2$ and the modified SR-GNN encoder is a variant of the proposed RCE, where the session embedding is linear combination of item embeddings.

Performance comparison between existing works (NARM~\cite{li2017narm} and SR-GNN~\cite{wu2019srgnn}) and their improved variants are shown in Table~\ref{tab:case_improve}.
We can see that generally the combination of RCE and RDM can gain dramatic performance improvements compared to the original models.
As for only adding one of the proposed techniques, RDM consistently improve the performance,
while RCE has a positive effect in most cases.

\begin{figure}[!t]
    \centering
    \subfigure[GRU4Rec]{\label{fig:tsne:gru4rec}\includegraphics[width=0.15\textwidth]{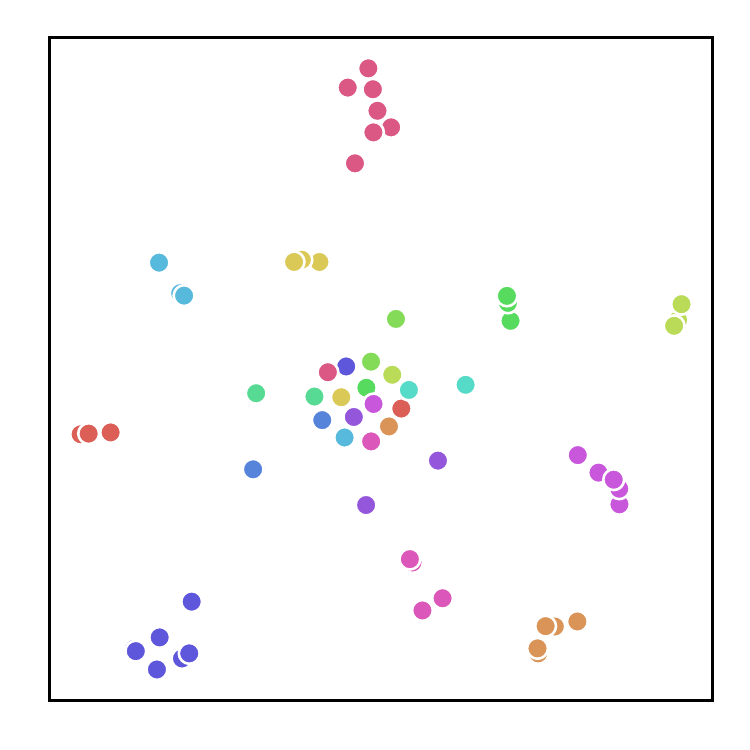}}
    \subfigure[SASRec]{\label{fig:tuning:sasrec}\includegraphics[width=0.15\textwidth]{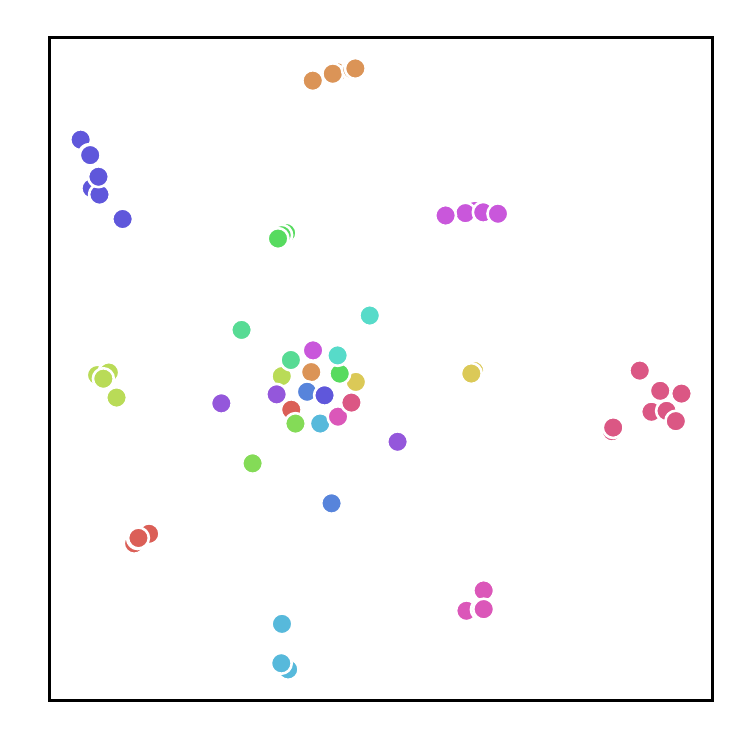}}
    \subfigure[CORE (ours)]{\label{fig:tuning:core}\includegraphics[width=0.15\textwidth]{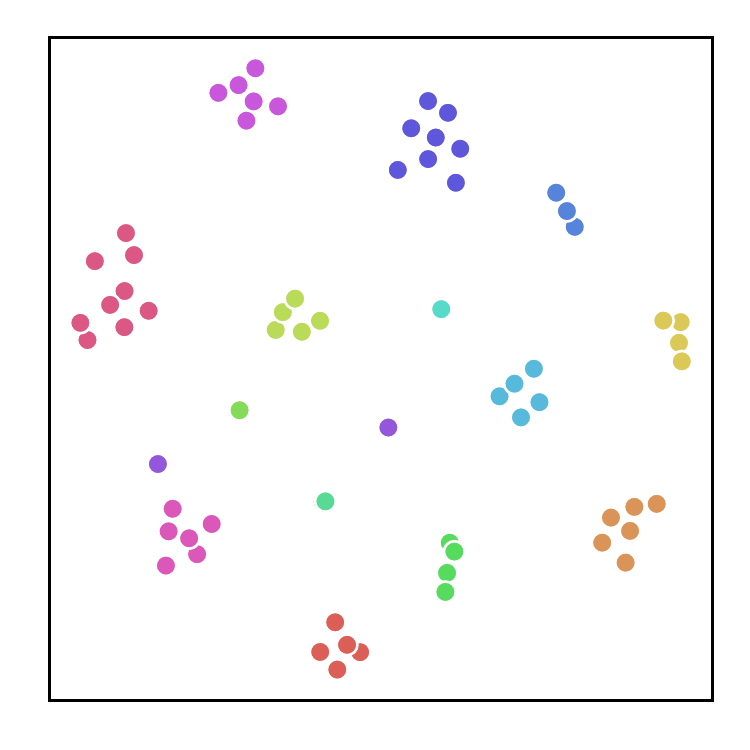}}
    \caption{Visualization of learned session embeddings.}
    \label{fig:tsne}
\end{figure}

\paratitle{Analysis 4: Visualization of session embeddings.}
To show the effectiveness of the proposed model CORE,
we visualize the learned session embeddings using t-SNE~\cite{van2008tsne} algorithm in Figure~\ref{fig:tsne}.
Detailed, sessions with the same next item are viewed as in the same class, and are marked the same color.
We randomly sample $15$ items as ground truth next items.
Then we extract all the corresponding sessions from our test set in Diginetica.
We can see that different classes' session embeddings learned by CORE are well separated from each other compared to those learned by GRU4Rec and SASRec.

\paratitle{Analysis 5: Parameter tuning.} At last, we exime the impact of several importance parameters towards CORE, \ie the margin $\tau$ and the dropout ratio $\rho$. In particular, we vary $\tau$ in $\{0.01, 0.02, \ldots, 0.1\}$ and present the results in Figure~\ref{fig:tuning} (a) and Figure~\ref{fig:tuning} (b). Our method constantly outperforms the best baseline, and achieves the best performance when $\tau = 0.07$ on Diginetica and $\tau = 0.08$ on RetailRocket.
Overall, the performance is stable around $0.04 \le \tau \le 0.08$.
Next,  we vary $\rho$ in the range $0$ and $0.5$ with steps of $0.1$.
As shown in Figure~\ref{fig:tuning} (c) and Figure~\ref{fig:tuning} (d),
when $\rho \le 0.3$, the item dropout increases the robustness of item embedding learning,
while a large ratio of item dropout may hurt the performance, as we can see, the performance decreases sharply when $\rho > 0.3$.

\begin{figure}[!t]
    \centering
    \subfigure[Diginetica]{\label{fig:tuning:dig-tau}\includegraphics[width=0.22\textwidth]{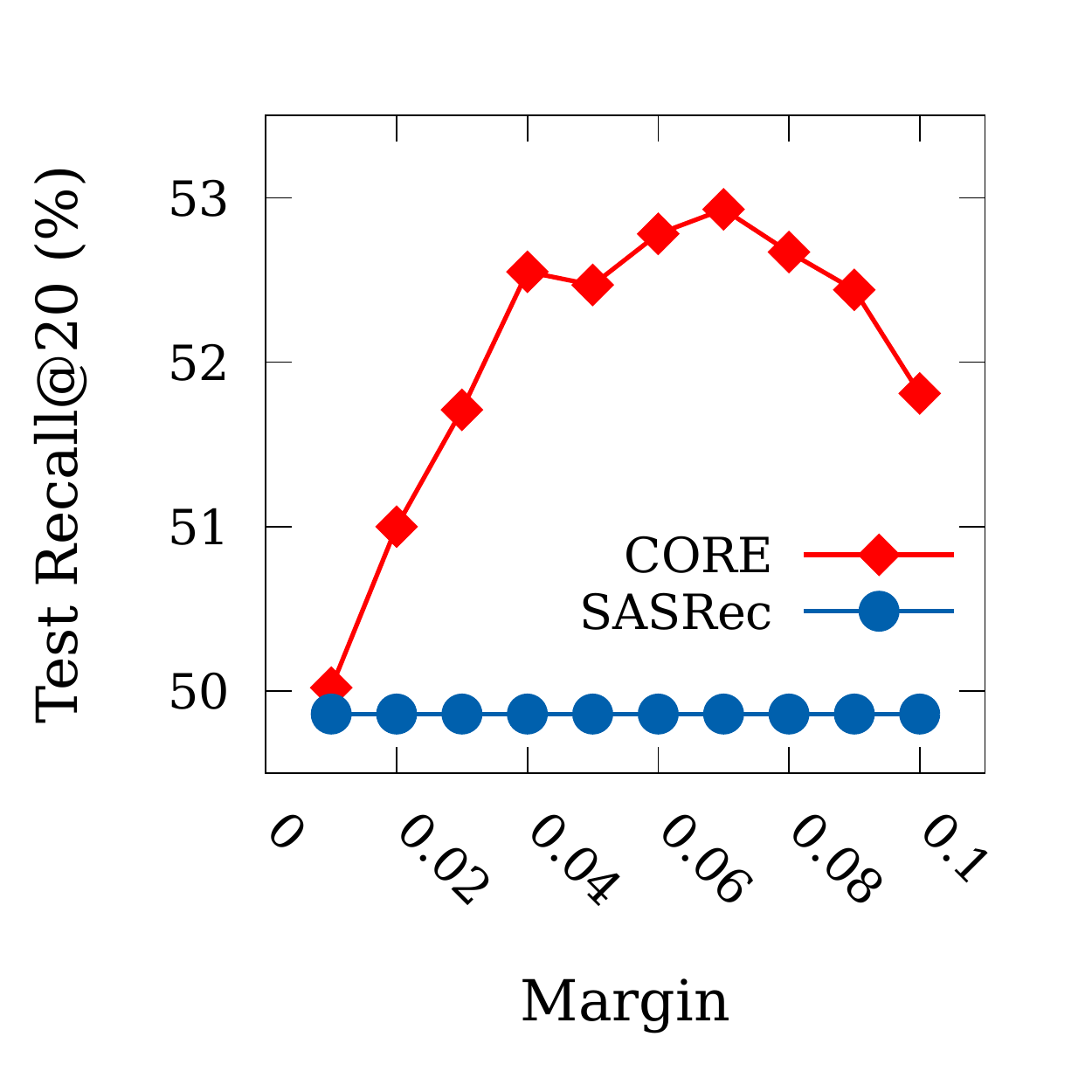}}
    \hspace{0.2cm}
    \subfigure[RetailRocket]{\label{fig:tuning:rr-tau}\includegraphics[width=0.22\textwidth]{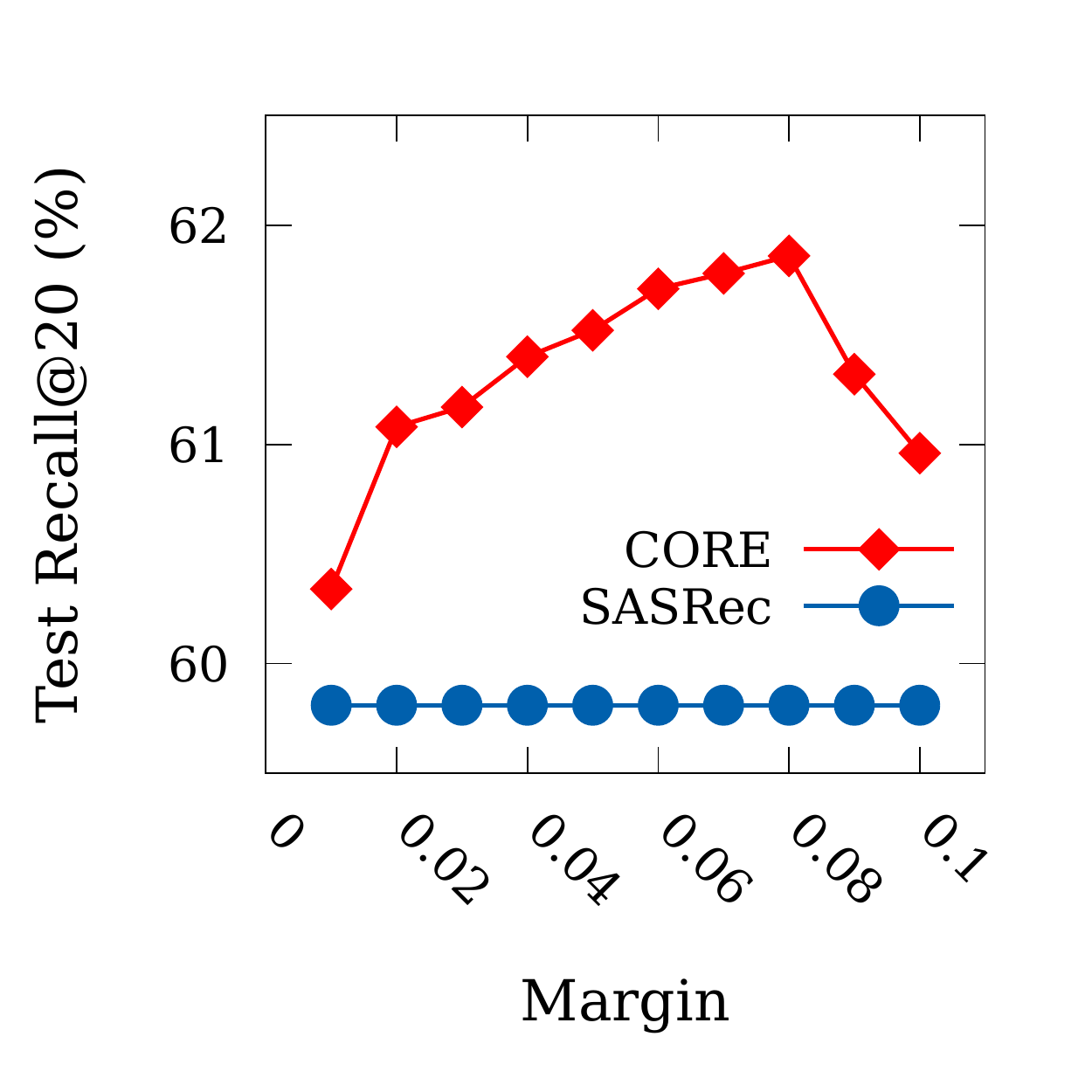}}
    \subfigure[Diginetica]{\label{fig:tuning:dig-drop}\includegraphics[width=0.22\textwidth]{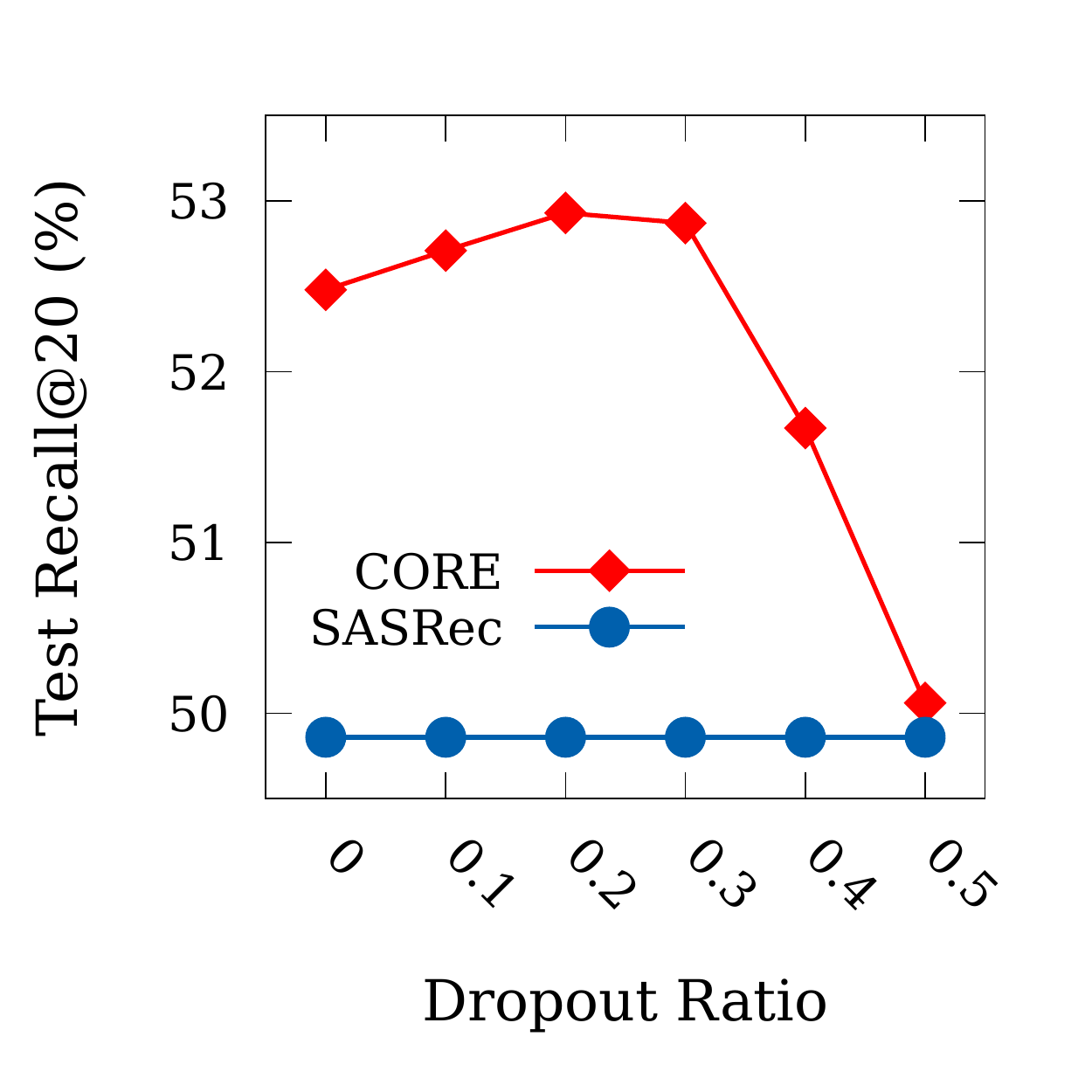}}
    \subfigure[RetailRocket]{\label{fig:tuning:rr-drop}\includegraphics[width=0.22\textwidth]{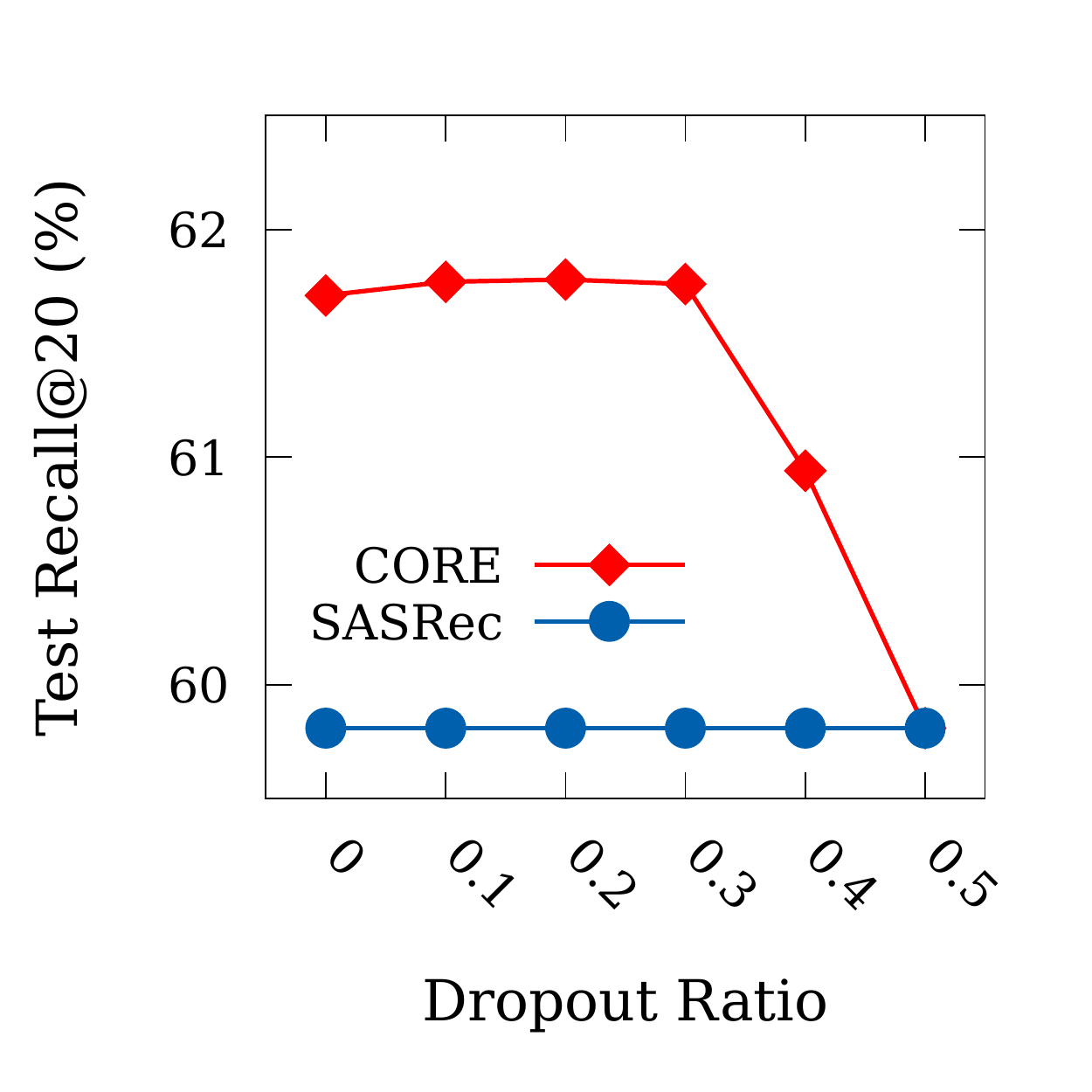}}
    \caption{Parameter tuning of CORE on Diginetica and RetailRocket datasets.}
    \label{fig:tuning}
\end{figure}

\section{Conclusion}
In this paper, we propose CORE, a simple and effective framework for session-based recommendation within consistent representation space, which unifies the representation space throughout encoding and decoding to overcome the inconsistent prediction issue. 
Different from stacking multiple non-linear layers over the item embeddings,
we propose to just apply weighted sum for item embeddings to encode sessions in the consistent representation space as items.
Besdies, we propose robust distance measuring techniques from multiple aspects to prevent overfitting of item embeddings in the proposed framework.
Extensive experiments on five public datasets have shown the effectiveness and efficiency of the proposed approach, as well as how the proposed techniques can help existing methods.

For future work, we will consider studying the expressive ability of the proposed representation-consistent encoder both theoretically and empirically. Besides, we will explore how to introduce side features and useful inductive biases
to the proposed framework.

\begin{acks}
This work was partially supported by the National Natural Science Foundation of China under Grant No. 61872369 and 61832017,
Beijing Outstanding Young Scientist Program under Grant No. BJJWZYJH012019100020098,
and CCF-Ant Group Research Fund.
This work was partially supported by Beijing Academy of Artificial Intelligence (BAAI).
Xin Zhao is the corresponding author.
\end{acks}

\clearpage

\bibliographystyle{ACM-Reference-Format}
\bibliography{main}


\begin{thebibliography}{34}


\ifx \showCODEN    \undefined \def \showCODEN     #1{\unskip}     \fi
\ifx \showDOI      \undefined \def \showDOI       #1{#1}\fi
\ifx \showISBNx    \undefined \def \showISBNx     #1{\unskip}     \fi
\ifx \showISBNxiii \undefined \def \showISBNxiii  #1{\unskip}     \fi
\ifx \showISSN     \undefined \def \showISSN      #1{\unskip}     \fi
\ifx \showLCCN     \undefined \def \showLCCN      #1{\unskip}     \fi
\ifx \shownote     \undefined \def \shownote      #1{#1}          \fi
\ifx \showarticletitle \undefined \def \showarticletitle #1{#1}   \fi
\ifx \showURL      \undefined \def \showURL       {\relax}        \fi
\providecommand\bibfield[2]{#2}
\providecommand\bibinfo[2]{#2}
\providecommand\natexlab[1]{#1}
\providecommand\showeprint[2][]{arXiv:#2}

\bibitem[\protect\citeauthoryear{Bian, Zhao, Zhou, Cai, He, Yin, and Wen}{Bian
  et~al\mbox{.}}{2021}]%
        {bian2021contrastive}
\bibfield{author}{\bibinfo{person}{Shuqing Bian}, \bibinfo{person}{Wayne~Xin
  Zhao}, \bibinfo{person}{Kun Zhou}, \bibinfo{person}{Jing Cai},
  \bibinfo{person}{Yancheng He}, \bibinfo{person}{Cunxiang Yin}, {and}
  \bibinfo{person}{Ji-Rong Wen}.} \bibinfo{year}{2021}\natexlab{}.
\newblock \showarticletitle{Contrastive Curriculum Learning for Sequential User
  Behavior Modeling via Data Augmentation}. In
  \bibinfo{booktitle}{\emph{{CIKM}}}.
\newblock


\bibitem[\protect\citeauthoryear{Chen, Kornblith, Norouzi, and Hinton}{Chen
  et~al\mbox{.}}{2020}]%
        {chen2020simclr}
\bibfield{author}{\bibinfo{person}{Ting Chen}, \bibinfo{person}{Simon
  Kornblith}, \bibinfo{person}{Mohammad Norouzi}, {and}
  \bibinfo{person}{Geoffrey~E. Hinton}.} \bibinfo{year}{2020}\natexlab{}.
\newblock \showarticletitle{A Simple Framework for Contrastive Learning of
  Visual Representations}. In \bibinfo{booktitle}{\emph{{ICML}}}.
  \bibinfo{pages}{1597--1607}.
\newblock


\bibitem[\protect\citeauthoryear{Chen and Wong}{Chen and Wong}{2020}]%
        {chen2020lessr}
\bibfield{author}{\bibinfo{person}{Tianwen Chen} {and}
  \bibinfo{person}{Raymond~Chi{-}Wing Wong}.} \bibinfo{year}{2020}\natexlab{}.
\newblock \showarticletitle{Handling Information Loss of Graph Neural Networks
  for Session-based Recommendation}. In \bibinfo{booktitle}{\emph{{KDD}}}.
\newblock


\bibitem[\protect\citeauthoryear{Cho, Kang, Hyun, and Yu}{Cho
  et~al\mbox{.}}{2021}]%
        {cho2021proxysr}
\bibfield{author}{\bibinfo{person}{Junsu Cho}, \bibinfo{person}{SeongKu Kang},
  \bibinfo{person}{Dongmin Hyun}, {and} \bibinfo{person}{Hwanjo Yu}.}
  \bibinfo{year}{2021}\natexlab{}.
\newblock \showarticletitle{Unsupervised Proxy Selection for Session-based
  Recommender Systems}. In \bibinfo{booktitle}{\emph{{SIGIR}}}.
  \bibinfo{pages}{327--336}.
\newblock


\bibitem[\protect\citeauthoryear{Gupta, Garg, Malhotra, Vig, and Shroff}{Gupta
  et~al\mbox{.}}{2019}]%
        {gupta2019niser}
\bibfield{author}{\bibinfo{person}{Priyanka Gupta}, \bibinfo{person}{Diksha
  Garg}, \bibinfo{person}{Pankaj Malhotra}, \bibinfo{person}{Lovekesh Vig},
  {and} \bibinfo{person}{Gautam~M Shroff}.} \bibinfo{year}{2019}\natexlab{}.
\newblock \showarticletitle{NISER: Normalized Item and Session Representations
  with Graph Neural Networks}.
\newblock \bibinfo{journal}{\emph{arXiv preprint arXiv:1909.04276}}
  (\bibinfo{year}{2019}).
\newblock


\bibitem[\protect\citeauthoryear{He, Fan, Wu, Xie, and Girshick}{He
  et~al\mbox{.}}{2020}]%
        {he2020moco}
\bibfield{author}{\bibinfo{person}{Kaiming He}, \bibinfo{person}{Haoqi Fan},
  \bibinfo{person}{Yuxin Wu}, \bibinfo{person}{Saining Xie}, {and}
  \bibinfo{person}{Ross~B. Girshick}.} \bibinfo{year}{2020}\natexlab{}.
\newblock \showarticletitle{Momentum Contrast for Unsupervised Visual
  Representation Learning}. In \bibinfo{booktitle}{\emph{{CVPR}}}.
  \bibinfo{pages}{9726--9735}.
\newblock


\bibitem[\protect\citeauthoryear{He, Zhao, Lin, Wang, Kale, and McAuley}{He
  et~al\mbox{.}}{2021}]%
        {he2021locally}
\bibfield{author}{\bibinfo{person}{Zhankui He}, \bibinfo{person}{Handong Zhao},
  \bibinfo{person}{Zhe Lin}, \bibinfo{person}{Zhaowen Wang},
  \bibinfo{person}{Ajinkya Kale}, {and} \bibinfo{person}{Julian McAuley}.}
  \bibinfo{year}{2021}\natexlab{}.
\newblock \showarticletitle{Locally constrained self-attentive sequential
  recommendation}. In \bibinfo{booktitle}{\emph{{CIKM}}}.
\newblock


\bibitem[\protect\citeauthoryear{Hidasi, Karatzoglou, Baltrunas, and
  Tikk}{Hidasi et~al\mbox{.}}{2016a}]%
        {hidasi2016gru4rec}
\bibfield{author}{\bibinfo{person}{Bal{\'{a}}zs Hidasi},
  \bibinfo{person}{Alexandros Karatzoglou}, \bibinfo{person}{Linas Baltrunas},
  {and} \bibinfo{person}{Domonkos Tikk}.} \bibinfo{year}{2016}\natexlab{a}.
\newblock \showarticletitle{Session-based Recommendations with Recurrent Neural
  Networks}. In \bibinfo{booktitle}{\emph{{ICLR}}}.
\newblock


\bibitem[\protect\citeauthoryear{Hidasi, Quadrana, Karatzoglou, and
  Tikk}{Hidasi et~al\mbox{.}}{2016b}]%
        {hidasi2016gru4recf}
\bibfield{author}{\bibinfo{person}{Bal{\'{a}}zs Hidasi},
  \bibinfo{person}{Massimo Quadrana}, \bibinfo{person}{Alexandros Karatzoglou},
  {and} \bibinfo{person}{Domonkos Tikk}.} \bibinfo{year}{2016}\natexlab{b}.
\newblock \showarticletitle{Parallel Recurrent Neural Network Architectures for
  Feature-rich Session-based Recommendations}. In
  \bibinfo{booktitle}{\emph{RecSys}}. \bibinfo{pages}{241--248}.
\newblock


\bibitem[\protect\citeauthoryear{Jiang, Hu, Fang, and Shi}{Jiang
  et~al\mbox{.}}{2020}]%
        {jiang2020multiplex}
\bibfield{author}{\bibinfo{person}{Xunqiang Jiang}, \bibinfo{person}{Binbin
  Hu}, \bibinfo{person}{Yuan Fang}, {and} \bibinfo{person}{Chuan Shi}.}
  \bibinfo{year}{2020}\natexlab{}.
\newblock \showarticletitle{Multiplex memory network for collaborative
  filtering}. In \bibinfo{booktitle}{\emph{{SDM}}}.
\newblock


\bibitem[\protect\citeauthoryear{Kang and McAuley}{Kang and McAuley}{2018}]%
        {kang2018sasrec}
\bibfield{author}{\bibinfo{person}{Wang{-}Cheng Kang} {and}
  \bibinfo{person}{Julian~J. McAuley}.} \bibinfo{year}{2018}\natexlab{}.
\newblock \showarticletitle{Self-Attentive Sequential Recommendation}. In
  \bibinfo{booktitle}{\emph{{ICDM}}}. \bibinfo{pages}{197--206}.
\newblock


\bibitem[\protect\citeauthoryear{Kingma and Ba}{Kingma and Ba}{2015}]%
        {kingma2015adam}
\bibfield{author}{\bibinfo{person}{Diederik~P. Kingma} {and}
  \bibinfo{person}{Jimmy Ba}.} \bibinfo{year}{2015}\natexlab{}.
\newblock \showarticletitle{Adam: {A} Method for Stochastic Optimization}. In
  \bibinfo{booktitle}{\emph{{ICLR}}}.
\newblock


\bibitem[\protect\citeauthoryear{Li, Ren, Chen, Ren, Lian, and Ma}{Li
  et~al\mbox{.}}{2017}]%
        {li2017narm}
\bibfield{author}{\bibinfo{person}{Jing Li}, \bibinfo{person}{Pengjie Ren},
  \bibinfo{person}{Zhumin Chen}, \bibinfo{person}{Zhaochun Ren},
  \bibinfo{person}{Tao Lian}, {and} \bibinfo{person}{Jun Ma}.}
  \bibinfo{year}{2017}\natexlab{}.
\newblock \showarticletitle{Neural Attentive Session-based Recommendation}. In
  \bibinfo{booktitle}{\emph{{CIKM}}}. \bibinfo{pages}{1419--1428}.
\newblock


\bibitem[\protect\citeauthoryear{Lin, Tian, Hou, and Zhao}{Lin
  et~al\mbox{.}}{2022}]%
        {lin2022ncl}
\bibfield{author}{\bibinfo{person}{Zihan Lin}, \bibinfo{person}{Changxin Tian},
  \bibinfo{person}{Yupeng Hou}, {and} \bibinfo{person}{Wayne~Xin Zhao}.}
  \bibinfo{year}{2022}\natexlab{}.
\newblock \showarticletitle{Improving Graph Collaborative Filtering with
  Neighborhood-enriched Contrastive Learning}. In
  \bibinfo{booktitle}{\emph{{TheWebConf}}}.
\newblock


\bibitem[\protect\citeauthoryear{Pan, Cai, Chen, Chen, and de~Rijke}{Pan
  et~al\mbox{.}}{2020}]%
        {pan2020star}
\bibfield{author}{\bibinfo{person}{Zhiqiang Pan}, \bibinfo{person}{Fei Cai},
  \bibinfo{person}{Wanyu Chen}, \bibinfo{person}{Honghui Chen}, {and}
  \bibinfo{person}{Maarten de Rijke}.} \bibinfo{year}{2020}\natexlab{}.
\newblock \showarticletitle{Star Graph Neural Networks for Session-based
  Recommendation}. In \bibinfo{booktitle}{\emph{{CIKM}}}.
\newblock


\bibitem[\protect\citeauthoryear{Qiu, Li, Huang, and Yin}{Qiu
  et~al\mbox{.}}{2019}]%
        {qiu2020fgnn}
\bibfield{author}{\bibinfo{person}{Ruihong Qiu}, \bibinfo{person}{Jingjing Li},
  \bibinfo{person}{Zi Huang}, {and} \bibinfo{person}{Hongzhi Yin}.}
  \bibinfo{year}{2019}\natexlab{}.
\newblock \showarticletitle{Rethinking the Item Order in Session-based
  Recommendation with Graph Neural Networks}. In
  \bibinfo{booktitle}{\emph{{CIKM}}}.
\newblock


\bibitem[\protect\citeauthoryear{Ren, Chen, Li, Ren, Ma, and de~Rijke}{Ren
  et~al\mbox{.}}{2019}]%
        {ren2019repeatnet}
\bibfield{author}{\bibinfo{person}{Pengjie Ren}, \bibinfo{person}{Zhumin Chen},
  \bibinfo{person}{Jing Li}, \bibinfo{person}{Zhaochun Ren},
  \bibinfo{person}{Jun Ma}, {and} \bibinfo{person}{Maarten de Rijke}.}
  \bibinfo{year}{2019}\natexlab{}.
\newblock \showarticletitle{RepeatNet: {A} Repeat Aware Neural Recommendation
  Machine for Session-Based Recommendation}. In
  \bibinfo{booktitle}{\emph{{AAAI}}}. \bibinfo{pages}{4806--4813}.
\newblock


\bibitem[\protect\citeauthoryear{Rendle, Freudenthaler, and
  Schmidt{-}Thieme}{Rendle et~al\mbox{.}}{2010}]%
        {rendle2010fpmc}
\bibfield{author}{\bibinfo{person}{Steffen Rendle}, \bibinfo{person}{Christoph
  Freudenthaler}, {and} \bibinfo{person}{Lars Schmidt{-}Thieme}.}
  \bibinfo{year}{2010}\natexlab{}.
\newblock \showarticletitle{Factorizing personalized Markov chains for
  next-basket recommendation}. In \bibinfo{booktitle}{\emph{{WWW}}}.
  \bibinfo{pages}{811--820}.
\newblock


\bibitem[\protect\citeauthoryear{Sohn}{Sohn}{2016}]%
        {sohn2016npairloss}
\bibfield{author}{\bibinfo{person}{Kihyuk Sohn}.}
  \bibinfo{year}{2016}\natexlab{}.
\newblock \showarticletitle{Improved Deep Metric Learning with Multi-class
  N-pair Loss Objective}. In \bibinfo{booktitle}{\emph{{NIPS}}}.
  \bibinfo{pages}{1849--1857}.
\newblock


\bibitem[\protect\citeauthoryear{Srivastava, Hinton, Krizhevsky, Sutskever, and
  Salakhutdinov}{Srivastava et~al\mbox{.}}{2014}]%
        {srivastava2014dropout}
\bibfield{author}{\bibinfo{person}{Nitish Srivastava},
  \bibinfo{person}{Geoffrey~E. Hinton}, \bibinfo{person}{Alex Krizhevsky},
  \bibinfo{person}{Ilya Sutskever}, {and} \bibinfo{person}{Ruslan
  Salakhutdinov}.} \bibinfo{year}{2014}\natexlab{}.
\newblock \showarticletitle{Dropout: a simple way to prevent neural networks
  from overfitting}.
\newblock \bibinfo{journal}{\emph{{JMLR}}} \bibinfo{volume}{15},
  \bibinfo{number}{1} (\bibinfo{year}{2014}), \bibinfo{pages}{1929--1958}.
\newblock


\bibitem[\protect\citeauthoryear{Sun, Liu, Wu, Pei, Lin, Ou, and Jiang}{Sun
  et~al\mbox{.}}{2019}]%
        {sun2019bert4rec}
\bibfield{author}{\bibinfo{person}{Fei Sun}, \bibinfo{person}{Jun Liu},
  \bibinfo{person}{Jian Wu}, \bibinfo{person}{Changhua Pei},
  \bibinfo{person}{Xiao Lin}, \bibinfo{person}{Wenwu Ou}, {and}
  \bibinfo{person}{Peng Jiang}.} \bibinfo{year}{2019}\natexlab{}.
\newblock \showarticletitle{BERT4Rec: Sequential Recommendation with
  Bidirectional Encoder Representations from Transformer}. In
  \bibinfo{booktitle}{\emph{{CIKM}}}. \bibinfo{pages}{1441--1450}.
\newblock


\bibitem[\protect\citeauthoryear{Van~der Maaten and Hinton}{Van~der Maaten and
  Hinton}{2008}]%
        {van2008tsne}
\bibfield{author}{\bibinfo{person}{Laurens Van~der Maaten} {and}
  \bibinfo{person}{Geoffrey Hinton}.} \bibinfo{year}{2008}\natexlab{}.
\newblock \showarticletitle{Visualizing data using t-SNE.}
\newblock \bibinfo{journal}{\emph{{JMLR}}} \bibinfo{volume}{9},
  \bibinfo{number}{11} (\bibinfo{year}{2008}).
\newblock


\bibitem[\protect\citeauthoryear{Wang, Cao, and Wang}{Wang
  et~al\mbox{.}}{2019}]%
        {SBRsurvey}
\bibfield{author}{\bibinfo{person}{Shoujin Wang}, \bibinfo{person}{Longbing
  Cao}, {and} \bibinfo{person}{Yan Wang}.} \bibinfo{year}{2019}\natexlab{}.
\newblock \showarticletitle{A Survey on Session-based Recommender Systems}.
\newblock \bibinfo{journal}{\emph{arXiv preprint arXiv:1902.04864}}
  (\bibinfo{year}{2019}).
\newblock


\bibitem[\protect\citeauthoryear{Wang and Isola}{Wang and Isola}{2020}]%
        {wang2020aliANDuni}
\bibfield{author}{\bibinfo{person}{Tongzhou Wang} {and}
  \bibinfo{person}{Phillip Isola}.} \bibinfo{year}{2020}\natexlab{}.
\newblock \showarticletitle{Understanding Contrastive Representation Learning
  through Alignment and Uniformity on the Hypersphere}. In
  \bibinfo{booktitle}{\emph{{ICML}}}, Vol.~\bibinfo{volume}{119}.
  \bibinfo{pages}{9929--9939}.
\newblock


\bibitem[\protect\citeauthoryear{Wang, Wei, Cong, Li, Mao, and Qiu}{Wang
  et~al\mbox{.}}{2020}]%
        {wang2020gcegnn}
\bibfield{author}{\bibinfo{person}{Ziyang Wang}, \bibinfo{person}{Wei Wei},
  \bibinfo{person}{Gao Cong}, \bibinfo{person}{Xiao{-}Li Li},
  \bibinfo{person}{Xianling Mao}, {and} \bibinfo{person}{Minghui Qiu}.}
  \bibinfo{year}{2020}\natexlab{}.
\newblock \showarticletitle{Global Context Enhanced Graph Neural Networks for
  Session-based Recommendation}. In \bibinfo{booktitle}{\emph{{SIGIR}}}.
\newblock


\bibitem[\protect\citeauthoryear{Wu, Tang, Zhu, Wang, Xie, and Tan}{Wu
  et~al\mbox{.}}{2019}]%
        {wu2019srgnn}
\bibfield{author}{\bibinfo{person}{Shu Wu}, \bibinfo{person}{Yuyuan Tang},
  \bibinfo{person}{Yanqiao Zhu}, \bibinfo{person}{Liang Wang},
  \bibinfo{person}{Xing Xie}, {and} \bibinfo{person}{Tieniu Tan}.}
  \bibinfo{year}{2019}\natexlab{}.
\newblock \showarticletitle{Session-Based Recommendation with Graph Neural
  Networks}. In \bibinfo{booktitle}{\emph{{AAAI}}}. \bibinfo{pages}{346--353}.
\newblock


\bibitem[\protect\citeauthoryear{Xia, Yin, Yu, Wang, Cui, and Zhang}{Xia
  et~al\mbox{.}}{2021}]%
        {xia2021dhcn}
\bibfield{author}{\bibinfo{person}{Xin Xia}, \bibinfo{person}{Hongzhi Yin},
  \bibinfo{person}{Junliang Yu}, \bibinfo{person}{Qinyong Wang},
  \bibinfo{person}{Lizhen Cui}, {and} \bibinfo{person}{Xiangliang Zhang}.}
  \bibinfo{year}{2021}\natexlab{}.
\newblock \showarticletitle{Self-Supervised Hypergraph Convolutional Networks
  for Session-based Recommendation}. In \bibinfo{booktitle}{\emph{{AAAI}}}.
\newblock


\bibitem[\protect\citeauthoryear{Xie, Sun, Liu, Wu, Gao, Ding, and Cui}{Xie
  et~al\mbox{.}}{2020}]%
        {xie2020cl4rec}
\bibfield{author}{\bibinfo{person}{Xu Xie}, \bibinfo{person}{Fei Sun},
  \bibinfo{person}{Zhaoyang Liu}, \bibinfo{person}{Shiwen Wu},
  \bibinfo{person}{Jinyang Gao}, \bibinfo{person}{Bolin Ding}, {and}
  \bibinfo{person}{Bin Cui}.} \bibinfo{year}{2020}\natexlab{}.
\newblock \showarticletitle{Contrastive Learning for Sequential
  Recommendation}.
\newblock \bibinfo{journal}{\emph{arXiv preprint arXiv:2010.14395}}
  (\bibinfo{year}{2020}).
\newblock


\bibitem[\protect\citeauthoryear{Xu, Zhao, Liu, Sheng, Xu, Zhuang, Fang, and
  Zhou}{Xu et~al\mbox{.}}{2019}]%
        {xu2019gcsan}
\bibfield{author}{\bibinfo{person}{Chengfeng Xu}, \bibinfo{person}{Pengpeng
  Zhao}, \bibinfo{person}{Yanchi Liu}, \bibinfo{person}{Victor~S. Sheng},
  \bibinfo{person}{Jiajie Xu}, \bibinfo{person}{Fuzhen Zhuang},
  \bibinfo{person}{Junhua Fang}, {and} \bibinfo{person}{Xiaofang Zhou}.}
  \bibinfo{year}{2019}\natexlab{}.
\newblock \showarticletitle{Graph Contextualized Self-Attention Network for
  Session-based Recommendation}. In \bibinfo{booktitle}{\emph{{IJCAI}}}.
  \bibinfo{pages}{3940--3946}.
\newblock


\bibitem[\protect\citeauthoryear{Yu, Zhu, Liu, Wu, Wang, and Tan}{Yu
  et~al\mbox{.}}{2020}]%
        {yu2020tagnn}
\bibfield{author}{\bibinfo{person}{Feng Yu}, \bibinfo{person}{Yanqiao Zhu},
  \bibinfo{person}{Qiang Liu}, \bibinfo{person}{Shu Wu}, \bibinfo{person}{Liang
  Wang}, {and} \bibinfo{person}{Tieniu Tan}.} \bibinfo{year}{2020}\natexlab{}.
\newblock \showarticletitle{{TAGNN:} Target Attentive Graph Neural Networks for
  Session-based Recommendation}. In \bibinfo{booktitle}{\emph{{SIGIR}}}.
\newblock


\bibitem[\protect\citeauthoryear{Zhao, Chen, Wang, Gu, and Wen}{Zhao
  et~al\mbox{.}}{2020}]%
        {zhao2020revisit}
\bibfield{author}{\bibinfo{person}{Wayne~Xin Zhao}, \bibinfo{person}{Junhua
  Chen}, \bibinfo{person}{Pengfei Wang}, \bibinfo{person}{Qi Gu}, {and}
  \bibinfo{person}{Ji{-}Rong Wen}.} \bibinfo{year}{2020}\natexlab{}.
\newblock \showarticletitle{Revisiting Alternative Experimental Settings for
  Evaluating Top-N Item Recommendation Algorithms}. In
  \bibinfo{booktitle}{\emph{{CIKM}}}.
\newblock


\bibitem[\protect\citeauthoryear{Zhao, Mu, Hou, Lin, Li, Chen, Lu, Wang, Tian,
  Pan, Min, Feng, Fan, Chen, Wang, Ji, Li, Wang, and Wen}{Zhao
  et~al\mbox{.}}{2021}]%
        {zhao2020recbole}
\bibfield{author}{\bibinfo{person}{Wayne~Xin Zhao}, \bibinfo{person}{Shanlei
  Mu}, \bibinfo{person}{Yupeng Hou}, \bibinfo{person}{Zihan Lin},
  \bibinfo{person}{Kaiyuan Li}, \bibinfo{person}{Yushuo Chen},
  \bibinfo{person}{Yujie Lu}, \bibinfo{person}{Hui Wang},
  \bibinfo{person}{Changxin Tian}, \bibinfo{person}{Xingyu Pan},
  \bibinfo{person}{Yingqian Min}, \bibinfo{person}{Zhichao Feng},
  \bibinfo{person}{Xinyan Fan}, \bibinfo{person}{Xu Chen},
  \bibinfo{person}{Pengfei Wang}, \bibinfo{person}{Wendi Ji},
  \bibinfo{person}{Yaliang Li}, \bibinfo{person}{Xiaoling Wang}, {and}
  \bibinfo{person}{Ji-Rong Wen}.} \bibinfo{year}{2021}\natexlab{}.
\newblock \showarticletitle{RecBole: Towards a Unified, Comprehensive and
  Efficient Framework for Recommendation Algorithms}. In
  \bibinfo{booktitle}{\emph{{CIKM}}}.
\newblock


\bibitem[\protect\citeauthoryear{Zhou, Wang, Zhao, Zhu, Wang, Zhang, Wang, and
  Wen}{Zhou et~al\mbox{.}}{2020}]%
        {zhou20s3rec}
\bibfield{author}{\bibinfo{person}{Kun Zhou}, \bibinfo{person}{Hui Wang},
  \bibinfo{person}{Wayne~Xin Zhao}, \bibinfo{person}{Yutao Zhu},
  \bibinfo{person}{Sirui Wang}, \bibinfo{person}{Fuzheng Zhang},
  \bibinfo{person}{Zhongyuan Wang}, {and} \bibinfo{person}{Ji{-}Rong Wen}.}
  \bibinfo{year}{2020}\natexlab{}.
\newblock \showarticletitle{S3-Rec: Self-Supervised Learning for Sequential
  Recommendation with Mutual Information Maximization}. In
  \bibinfo{booktitle}{\emph{{CIKM}}}.
\newblock


\bibitem[\protect\citeauthoryear{Zhou, Yu, Zhao, and Wen}{Zhou
  et~al\mbox{.}}{2022}]%
        {zhou2022fmlp}
\bibfield{author}{\bibinfo{person}{Kun Zhou}, \bibinfo{person}{Hui Yu},
  \bibinfo{person}{Wayne~Xin Zhao}, {and} \bibinfo{person}{Ji{-}Rong Wen}.}
  \bibinfo{year}{2022}\natexlab{}.
\newblock \showarticletitle{Filter-enhanced {MLP} is All You Need for
  Sequential Recommendation}. In \bibinfo{booktitle}{\emph{{TheWebConf}}}.
\newblock


\end{thebibliography}

\end{document}